\documentclass[preprint,12pt]{elsarticle}

\usepackage{amssymb}
\usepackage{amsmath}
\usepackage{amsthm}
\usepackage{algorithm}
\usepackage{xurl}
\usepackage{microtype}
\usepackage[hidelinks]{hyperref}
\ifdefined\pdfminorversion
\fi

\newtheorem{lemma}{Lemma}
\newtheorem{proposition}{Proposition}
\newtheorem{theorem}{Theorem}
\newcommand{\symdiff}{\mathbin{\triangle}}
\providecommand{\doi}[1]{\href{https://doi.org/#1}{\nolinkurl{doi:#1}}}

\journal{Discrete Applied Mathematics}

\begin{document}

\begin{frontmatter}

\title{A \texorpdfstring{$2/3$}{2/3}-Approximation Algorithm for the Residual Maximum Matching
Problem in General Graphs}

\author[inst1]{Quanlong Niu}

\author[inst2]{Riheng Jia\corref{cor1}}
\ead{rihengjia@zjnu.edu.cn}

\cortext[cor1]{Corresponding author}

\affiliation[inst1]{
    organization={School of Big Data and Artificial Intelligence, Fuyang University of Technology},
    city={Fuyang},
    postcode={236000},
    state={Anhui},
    country={China}
}

\affiliation[inst2]{
    organization={School of Computer Science and Technology, Zhejiang Normal University},
    city={Jinhua},
    postcode={321004},
    state={Zhejiang},
    country={China}
}

\begin{abstract}
The residual maximum matching problem asks for a maximum matching whose
deletion leaves a graph with the largest possible matching number. For
unweighted graphs, previous work provides a $1/2$-approximation guarantee
in general graphs and a $2/3$ guarantee in graphs with a perfect matching.
We present a $2/3$-approximation algorithm running in $O(n^5)$ time,
extending the latter guarantee to general graphs. The algorithm improves
the residual matching number through single-edge exchanges involving
free vertices. We prove that, for any maximum matching that admits no
further improvement by such exchanges, the residual matching number is
at least $2/3$ of the optimum, and we give an example showing that this
bound is tight.
\end{abstract}
\begin{keyword}
Maximum matching \sep
Residual graph \sep
Approximation algorithm \sep
Single-edge exchange
\end{keyword}

\end{frontmatter}

\setcounter{section}{-1}
\section{Introduction}
\label{sec:introduction}

An edge coloring partitions the edges of a graph into matchings, with
each color class consisting of edges that have no common endpoints.
This structure also describes pairwise tasks performed in successive
stages: tasks in the same stage cannot share an endpoint, whereas
vertex resources can be reused in different stages. When two stages
have strict priorities, the first stage must accommodate as many tasks
as possible, and the number of tasks in the second stage is then
maximized over the remaining edges. The resulting optimization problem
asks for a maximum matching whose deletion leaves a graph with the
largest possible matching number.

In their work on edge coloring, Gabow, Nishizeki, Kariv, Leven, and
Terada introduced the lexicographical matching problem, which
lexicographically maximizes the cardinalities of two edge-disjoint
matchings, and established related hardness results~\cite{gabow1985}.
This model is equivalent to the residual maximum matching problem.
Kamalian and Mkrtchyan further studied residual matching numbers after
the deletion of a maximum matching and proved that the associated
optimization problems remain NP-hard even for connected bipartite
graphs of maximum degree three~\cite{kamalian2008}. Once the first
matching is fixed, a maximum matching of the residual graph can be
found in polynomial time; the difficulty lies in choosing the first
maximum matching.

Let $\nu(G)$ denote the matching number of $G$, and let $L(G)$ and
$l(G)$ denote, respectively, the largest and smallest residual matching
numbers obtained by deleting a maximum matching. Khojabaghyan and
Mkrtchyan proved that $L(G)\leq 2l(G)$ for every graph and that
$L(G)\leq 3l(G)/2$ when $G$ has a perfect matching~\cite{khojabaghyan2012}.
Thus, an arbitrary maximum matching gives a $1/2$-approximation in
general graphs and a $2/3$-approximation in graphs with a perfect
matching. Mkrtchyan subsequently studied the approximation complexity
of these parameters~\cite{mkrtchyan2025}, using the above $1/2$ guarantee
for general graphs and proving that, unless $\mathrm{P}=\mathrm{NP}$,
for every $0<\varepsilon<1/88$ there is no polynomial-time
$(1-\varepsilon)$-approximation algorithm for $L(G)$ on connected
bipartite graphs.

The maximum $2$-edge-colorable subgraph problem, in contrast, maximizes
the total number of edges in two edge-disjoint matchings.
Kosowski gave a $5/6$-approximation algorithm~\cite{kosowski2009}, and
Chen, Konno, and Matsushita improved the guarantee to approximately
$0.842$~\cite{chen2010}. Chen, Fan, and Wang further studied
parameterized algorithms for finding two edge-disjoint matchings and
approximation algorithms for their total weight~\cite{chen2014}.
These results optimize the combined size or weight of the two
matchings, whereas the lexicographical model requires the first
matching to have cardinality exactly $\nu(G)$. A maximum
$2$-edge-colorable subgraph need not contain a maximum matching of the
original graph~\cite{mkrtchyan2025}. Consequently, approximation
guarantees for the combined objective do not directly yield guarantees
for $L(G)$.

We improve the above $1/2$ baseline for general graphs to $2/3$. Our
algorithm uses single-edge exchanges involving free vertices: an edge
of the current maximum matching is replaced by an adjacent edge whose
other endpoint is free, and the exchange is accepted only if it
increases the residual matching number. Each accepted exchange
preserves the maximum cardinality of the first matching and increases
the residual matching number by one. Using standard maximum matching
algorithms~\cite{edmonds1965,galil1986}, the procedure runs in $O(n^5)$
time and $O(n^2)$ space. The key observation is that, at termination,
every edge of the maximum matching whose individual addition to the
residual graph increases its matching number, called a \emph{critical
edge}, has neither endpoint adjacent to a free vertex.
Combining this structural property with
a deficit bound for noncritical edges and an assignment argument on
oriented alternating paths and cycles, we establish the $2/3$
approximation guarantee, extending the ratio for graphs with a perfect
matching to general graphs. A seven-vertex example shows that the
$2/3$ bound implied by this termination condition is tight.

\section{Problem definition and preliminaries}
\label{sec:preliminaries}

\subsection{Problem definition}

Let $G=(V,E)$ be a finite simple undirected graph with $n=|V|$ and
unweighted edges. Write $\nu(G)$ for the cardinality of a maximum
matching of $G$, and let $\mathcal{M}(G)$ be the set of all maximum
matchings of $G$. For an edge set $S\subseteq E$, let $V(S)$ denote
the set of its endpoints, and let $G-S$ denote the graph obtained
by deleting only the edges of $S$ and retaining all vertices. For
an edge $e$ not already present in a graph $G$, let $G+e$ denote
the graph obtained by adding $e$.

The objective is to maximize the residual matching number after
deleting a maximum matching:
\begin{equation}
L(G)=\max_{M\in\mathcal{M}(G)}\nu(G-M).
\label{eq:objective}
\end{equation}
Different maximum matchings may yield different residual matching
numbers. For example, the five-vertex path in
Fig.~\ref{fig:residual} has $\nu(G)=2$, but deleting the maximum
matchings $M=\{12,45\}$ and $M'=\{23,45\}$ gives
$\nu(G-M)=1$ and $\nu(G-M')=2$, respectively.

\begin{figure}[htbp]
\centering
\includegraphics[width=\linewidth]{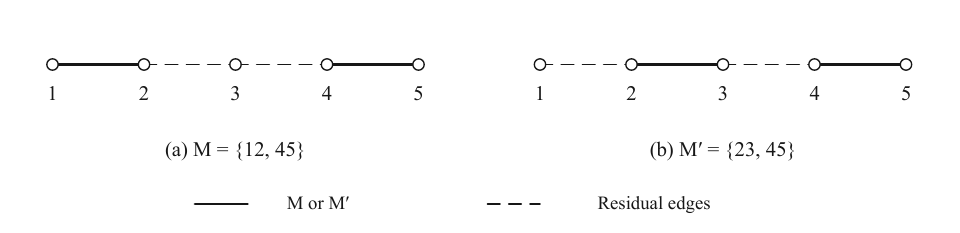}
\caption{Residual graphs corresponding to different maximum matchings.}
\label{fig:residual}
\end{figure}

\subsection{Symmetric difference and alternating paths}

For two matchings $A$ and $B$, their symmetric difference is
$A\symdiff B=(A\setminus B)\cup(B\setminus A)$. In the graph
formed by these edges, every connected component containing an
edge is an alternating path or an even cycle, with edges of
$A\setminus B$ and $B\setminus A$ appearing alternately.
If the first and last edges of a path both belong to
$A\setminus B$, then the path contains one more $A$-edge than
$B$-edge. Replacing the $B$-edges on the path by its $A$-edges
therefore yields a matching with one more edge than
$B$~\cite{edmonds1965}.

In particular, if $A$ and $B$ are both maximum matchings of the same
graph, every path component must contain equal numbers of the two
types of edges. Otherwise, exchanging the two types along that path
would increase one of the matchings, contradicting its maximum
cardinality. Thus, every connected component of $A\symdiff B$
containing an edge is an alternating path of even length or an
alternating cycle, with equal numbers of $A$-edges and
$B$-edges~\cite{galil1986}.

\section{The single-edge exchange algorithm}
\label{sec:algorithm}

\subsection{The exchange criterion}

Let $M$ be the current maximum matching, and let $H=G-M$ and
$\mu=\nu(H)$. The vertices in $V\setminus V(M)$ are called
\emph{$M$-free vertices}. An edge $e\in M$ is \emph{critical
with respect to $H$} if $\nu(H+e)=\mu+1$, and \emph{noncritical}
if $\nu(H+e)=\mu$.

Let $u$ be an $M$-free vertex, and suppose that $vw\in M$ and
$uv\in E$. Set
\begin{equation}
M'=(M\setminus\{vw\})\cup\{uv\},
\qquad G-M'=H+vw-uv.
\label{eq:exchange}
\end{equation}
Since $u$ is free, $M'$ is a matching of the same cardinality as
$M$ and is therefore also maximum. Figure~\ref{fig:exchange}
illustrates this exchange.

\begin{figure}[htbp]
\centering
\includegraphics[width=\linewidth]{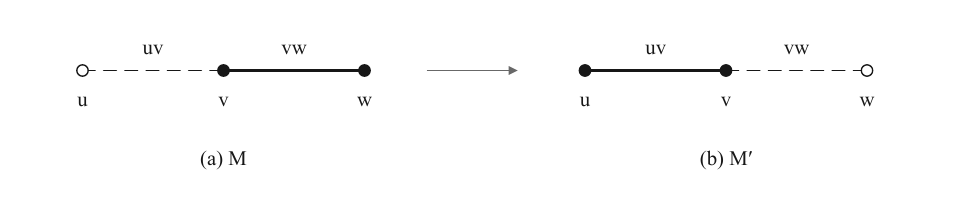}
\caption{The single-edge exchange $vw\to uv$. Vertex $u$ is free
before the exchange, and $w$ is free afterwards. Thick solid lines
represent edges of the current maximum matching; dashed lines
represent residual edges.}
\label{fig:exchange}
\end{figure}

\begin{lemma}
\label{lem:exchange}
For the exchange in~\eqref{eq:exchange},
$\nu(G-M')=\mu+1$ if and only if $vw$ is critical.
\end{lemma}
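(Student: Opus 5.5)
The plan is to compare both residual graphs with the intermediate graph $H+vw$. By~\eqref{eq:exchange}, $G-M'=H+vw-uv$. Here $vw\notin H$ because $vw\in M$. Also $uv\in H$, since $uv\in E$ and $uv\notin M$ (as $u$ is $M$-free). Two elementary facts drive the argument. First, adding one edge to a graph raises its matching number by at most one. Second, deleting an edge never raises it. Together these give the chain
\[
\nu(G-M')=\nu(H+vw-uv)\le \nu(H+vw)\le \mu+1 .
\]

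For the ``only if'' direction, suppose $vw$ is noncritical, so $\nu(H+vw)=\mu$. The chain then gives $\nu(G-M')\le\mu$. In particular $\nu(G-M')\neq\mu+1$.

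For the ``if'' direction, suppose $vw$ is critical, and let $N$ be a maximum matching of $H+vw$, so $|N|=\mu+1$. I first observe that $vw\in N$. Otherwise $N$ would be a matching of $H$ of size $\mu+1>\nu(H)$, which is impossible. Since $N$ is a matching and $vw\in N$ covers $v$, no other edge of $N$ is incident to $v$, so $uv\notin N$. Hence $N$ is a matching of $H+vw-uv=G-M'$, and $\nu(G-M')\ge\mu+1$. Combined with the upper bound, this gives equality.

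The only point needing care is the ``if'' direction. There one must make sure that deleting $uv$ does not destroy the extra matching edge, and this is settled by the observation that every maximum matching of $H+vw$ must use $vw$. The rest is bookkeeping: checking that $uv$ genuinely lies in $H$ and $vw$ does not, so that the expression $H+vw-uv$ is well defined.
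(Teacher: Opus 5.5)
Your proof is correct and follows essentially the same route as the paper's. Both arguments bound $\nu(G-M')$ above by $\nu(H+vw)\le\mu+1$, and in the critical case both observe that any matching of size $\mu+1$ in $H+vw$ must contain $vw$, hence avoids $uv$ and survives in $G-M'$; your only cosmetic difference is proving the converse by contrapositive rather than by the paper's direct chain of inequalities.
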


\begin{proof}
Adding one edge increases the matching number by at most one.
If $vw$ is critical, then $\nu(H+vw)=\mu+1$. Take a matching
$N^+$ of cardinality $\mu+1$ in $H+vw$. It must contain $vw$;
otherwise it would be a matching in $H$, contradicting
$\nu(H)=\mu$. Since $uv$ and $vw$ share the vertex $v$, the
matching $N^+$ does not contain $uv$ and is therefore also a
matching in $H+vw-uv=G-M'$. Together with
$G-M'\subseteq H+vw$, this gives
\[
\mu+1\leq \nu(G-M')\leq \nu(H+vw)=\mu+1.
\]
Hence $\nu(G-M')=\mu+1$.

Conversely, if $\nu(G-M')=\mu+1$, then the inclusion
$G-M'\subseteq H+vw$ gives
\[
\mu+1=\nu(G-M')\leq \nu(H+vw)\leq\mu+1.
\]
Thus $\nu(H+vw)=\mu+1$, so $vw$ is critical.
\end{proof}

\subsection{The algorithm and its termination property}

Algorithm~\ref{alg:exchange} starts with an arbitrary maximum
matching and repeatedly performs the improving exchanges described
in Lemma~\ref{lem:exchange}.

\begin{algorithm}[htbp]
\caption{Single-edge exchange algorithm}
\label{alg:exchange}
\begin{minipage}{\linewidth}
\textbf{Input:} A finite simple undirected graph $G$.\\
\textbf{Output:} A maximum matching $M$ of $G$ and a maximum
matching $N$ of the residual graph $G-M$.
\begin{enumerate}
\item Find an arbitrary maximum matching $M$ of $G$.
\item Set $H=G-M$. Find a maximum matching $N$ of $H$, and set
      $\mu=|N|$ and $U=V\setminus V(M)$.
\item Scan the edges $vw\in M$ in a fixed order. If neither endpoint
      has a neighbor in $U$, continue the scan.
\item Otherwise, choose any vertex $u$ with $u\in U$ and $uv\in E$,
      and compute $\nu(H+vw)$.
\item If $\nu(H+vw)=\mu+1$, set
      $M\leftarrow(M\setminus\{vw\})\cup\{uv\}$ and return to Step~2.
\item If the scan ends without a successful exchange, output $(M,N)$.
\end{enumerate}
\end{minipage}
\end{algorithm}

\begin{proposition}
\label{prop:termination}
Algorithm~\ref{alg:exchange} performs at most $\lfloor n/2\rfloor$
successful exchanges. Its output $M$ is a maximum matching, and
every edge of $M$ with an endpoint adjacent to a vertex in $U$
is noncritical; that is,
\begin{equation}
u\notin V(M),\quad uv\in E,\quad vw\in M
\quad\Longrightarrow\quad \nu(H+vw)=\mu.
\tag{S}\label{eq:S}
\end{equation}
\end{proposition}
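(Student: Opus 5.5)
The plan is to prove the three claims of Proposition~\ref{prop:termination} in turn: the bound on the number of exchanges, the maximality of the output $M$, and property~\eqref{eq:S} at termination. The bound and maximality both follow from Lemma~\ref{lem:exchange} together with a simple potential argument. Property~\eqref{eq:S} is read off from the stopping rule of the algorithm.

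For the bound on the number of exchanges, we use $\mu=\nu(G-M)$ as the potential. In Step~5 an exchange is performed only when $\nu(H+vw)=\mu+1$, that is, when $vw$ is critical. In that case $u\in U$ is $M$-free and $uv\in E$, so Lemma~\ref{lem:exchange} applies and the new residual matching number is exactly $\mu+1$. Every successful exchange therefore raises $\mu$ by one. Since $\mu$ is always at least $0$ and at most $\lfloor n/2\rfloor$ (any matching of an $n$-vertex graph has at most $\lfloor n/2\rfloor$ edges), there are at most $\lfloor n/2\rfloor$ successful exchanges. Between exchanges, each scan visits the finitely many edges of $M$, so the algorithm terminates.

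For maximality, the initial $M$ is maximum. Each exchange replaces $vw$ by $uv$, where $u$ is $M$-free and $v$ is the only endpoint of $vw$ shared with $uv$, so the result is a matching of the same cardinality. By induction on the number of exchanges, every intermediate $M$, including the output, is a maximum matching.

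The only delicate point is property~\eqref{eq:S}. The difficulty is that Step~4 tests a single chosen neighbor $u$, and only of the endpoint $v$, whereas \eqref{eq:S} quantifies over every free $u$ adjacent to either endpoint. The key observation is that the test $\nu(H+vw)=\mu+1$ does not depend on which $u$ is chosen, nor on which endpoint is adjacent to $U$. By symmetry, we may relabel the edge so that the endpoint adjacent to $U$ is called $v$.

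With this in hand, suppose the algorithm outputs $(M,N)$ after a final scan with no successful exchange. The final $H$, $\mu$ and $U$ are the values computed in the last execution of Step~2. Take any $u\notin V(M)$ and $vw\in M$ with $uv\in E$. Then $vw$ has an endpoint adjacent to $U$, so the scan reached Step~4 for this edge and computed $\nu(H+vw)$. Since no exchange occurred, $\nu(H+vw)\neq\mu+1$. Adding one edge never decreases the matching number and raises it by at most one, so $\nu(H+vw)=\mu$, which is \eqref{eq:S}. Equivalently, by Lemma~\ref{lem:exchange}, a critical $vw$ would have triggered a successful exchange, contradicting termination.
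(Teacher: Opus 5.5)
Your proposal is correct and follows the same route as the paper. Lemma~\ref{lem:exchange} makes $\mu$ a potential that rises by one per exchange and is bounded by $\lfloor n/2\rfloor$, each exchange preserves $|M|$, and property~\eqref{eq:S} is read off from the stopping rule; your remark that the test $\nu(H+vw)=\mu+1$ does not depend on the chosen free neighbor $u$ or on which endpoint is labeled $v$ makes explicit a point the paper's brief argument leaves implicit.
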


\begin{proof}
By Lemma~\ref{lem:exchange}, each successful exchange preserves
the maximum cardinality of $M$ and increases the integer $\mu$
by one. Since $\mu\leq\lfloor n/2\rfloor$, the number of exchanges
is bounded. At termination, every matching edge with an endpoint
adjacent to a free vertex has been tested without yielding an improvement.
Thus property~\eqref{eq:S} holds.
\end{proof}

Equivalently, property~\eqref{eq:S} states that neither endpoint
of any critical edge in $M$ has a neighbor in $U$.

\section{A deficit bound for noncritical edges}
\label{sec:deficit}

\begin{lemma}
\label{lem:deficit}
Let $M$ be a maximum matching of a graph $G$, and let $H=G-M$
and $\mu=\nu(H)$. Let $N_0$ be a matching of cardinality $\mu-t$
in $H$, and let $Z\subseteq M$ consist of edges that are
noncritical with respect to $H$. If
$V(Z)\cap V(N_0)=\varnothing$, then
\begin{equation}
|Z|\leq 2t.
\label{eq:deficit}
\end{equation}
\end{lemma}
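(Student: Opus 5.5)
The plan is to compare the matching $N_0\cup Z$ of $H+Z$ with a maximum matching $N$ of $H$, and to bound the surplus of $N_0\cup Z$ over $N$ separately on each component of their symmetric difference.

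\textbf{Setup.} Since $V(Z)\cap V(N_0)=\varnothing$, the set $N_0\cup Z$ is a matching. It lives in the graph $H\cup Z$, obtained from $H$ by adding the edges of $Z$. Its cardinality is $\mu-t+|Z|$, because $Z\subseteq M$ is disjoint from the edge set of $H=G-M$, so no edge is counted twice. Fix a maximum matching $N$ of $H$, with $|N|=\mu$. Every edge of $N$ lies in $H$, so no edge of $Z$ belongs to $N$.

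Form the symmetric difference $D=(N_0\cup Z)\symdiff N$. Each component $C$ of $D$ that contains an edge is an alternating path or an even cycle. Define its \emph{balance} as
\[
\beta(C)=|C\cap(N_0\cup Z)|-|C\cap N|.
\]
Then $\beta(C)\le 1$ for every component. Edges of $N_0\cap N$ cancel in the symmetric difference, so summing over all components gives
\[
\sum_C\beta(C)=|N_0\cup Z|-|N|=|Z|-t .
\]

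\textbf{Classifying components by the number of $Z$-edges.}
\begin{enumerate}
\item[(a)] If $C$ contains no edge of $Z$, then $C\subseteq H$. A component with $\beta(C)=1$ would be an $N$-augmenting path in $H$, contradicting the maximality of $N$. Hence $\beta(C)\le 0$.
\item[(b)] If $C$ contains exactly one edge $e\in Z$, then $C\subseteq H+e$. A component with $\beta(C)=1$ would make $N\symdiff C$ a matching of size $\mu+1$ in $H+e$. This contradicts the noncriticality of $e$, i.e.\ $\nu(H+e)=\mu$. Hence $\beta(C)\le 0$.
\item[(c)] If $C$ contains at least two edges of $Z$, we only use the trivial bound $\beta(C)\le 1$.
\end{enumerate}

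\textbf{Counting.} Each edge of $Z$ is an edge of $D$, so it lies in exactly one component. The components of type (c) are therefore at most $|Z|/2$ in number. Combining this with the balance identity gives
\[
|Z|-t=\sum_C\beta(C)\le \frac{|Z|}{2},
\]
which rearranges to $|Z|\le 2t$.

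\textbf{Expected obstacle.} The only delicate point is step (b): we must check that $N\symdiff C$ really is a matching inside $H+e$. This holds because $C$ is a component of the symmetric difference, so swapping along $C$ keeps $N$ a matching, and the only non-$H$ edge it can use is $e$. The remaining steps are bookkeeping.
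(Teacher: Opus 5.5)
Your proposal is correct and follows essentially the same route as the paper. Both arguments compare $R=N_0\cup Z$ with a maximum matching $N$ of $H$, use augmenting paths and noncriticality to show that every surplus component of $R\symdiff N$ contains at least two edges of $Z$, and then count; your per-component balance $\beta(C)$ is a repackaging of the paper's $|Z|-t=p-q\le p$ together with $2p\le|Z|$.
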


\begin{proof}
Take a maximum matching $N$ of $H$ and set $R=N_0\cup Z$.
By the vertex-disjointness assumption, $R$ is a matching and
$|R|=\mu-t+|Z|$. In $R\symdiff N$, let $p$ count the path
components with one more $R$-edge than $N$-edge, and let $q$
count those with one more $N$-edge than $R$-edge. The remaining
components do not affect the difference in cardinalities, so
\begin{equation}
|Z|-t=|R|-|N|=p-q\leq p.
\label{eq:pathdifference}
\end{equation}

Each path counted by $p$ contains at least two edges of $Z$.
Indeed, if it contains no edge of $Z$, as in
Fig.~\ref{fig:deficit}(a), then the entire path lies in $H$
and is an $N$-augmenting path, contradicting the maximum
cardinality of $N$. If it contains exactly one edge $e\in Z$,
as in Fig.~\ref{fig:deficit}(b), then the entire path lies in
$H+e$, and augmenting along it yields a matching of cardinality
$\mu+1$. However, $e$ is noncritical, so $\nu(H+e)=\mu$,
a contradiction.

These $p$ paths are pairwise vertex-disjoint, and each contains
at least two edges of $Z$. They therefore contain at least $2p$
distinct edges of $Z$ in total, giving $2p\leq |Z|$.
Together with~\eqref{eq:pathdifference}, this yields
$|Z|-t\leq |Z|/2$, which proves the claim.
\end{proof}

\begin{figure}[htbp]
\centering
\includegraphics[width=\linewidth]{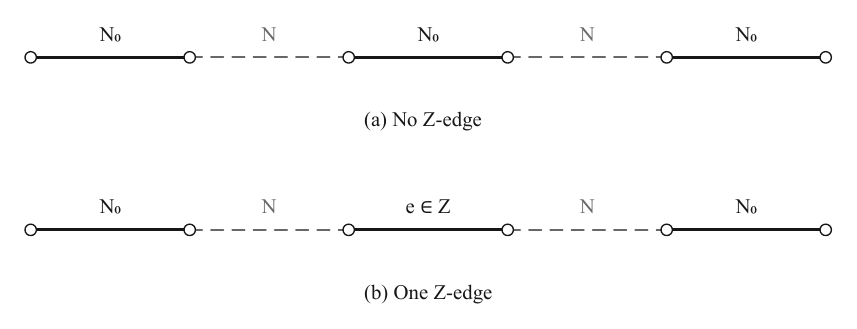}
\caption{Augmenting paths in Lemma~\ref{lem:deficit}.}
\label{fig:deficit}
\end{figure}

\section{The approximation guarantee for general graphs}
\label{sec:guarantee}

Throughout this section, let $M$ be a maximum matching satisfying
property~\eqref{eq:S}, and write $H=G-M$ and $\mu=\nu(H)$.
Take an optimal matching pair $(F,Q)$, where $F$ is a maximum
matching of $G$, $Q$ is a maximum matching of $G-F$, and
$|Q|=L(G)$. For brevity, write $L=L(G)$.

\subsection{The comparison matching and noncritical conflict edges}

Define
\begin{equation}
K=M\cap Q,\qquad c=|K|,\qquad
a=|M\setminus F|=|F\setminus M|.
\label{eq:K}
\end{equation}
We call $K$ the \emph{conflict edge set}. Since
$F\cap Q=\varnothing$, we have $K\subseteq M\setminus F$.
Moreover, $F\setminus M$ is a matching in $H$, so
\begin{equation}
c\leq a\leq\mu.
\label{eq:a}
\end{equation}
Let $r=|(M\setminus F)\setminus K|=a-c$.
Figure~\ref{fig:sets} illustrates the relationships among
these edge sets.

\begin{figure}[htbp]
\centering
\includegraphics[width=\linewidth]{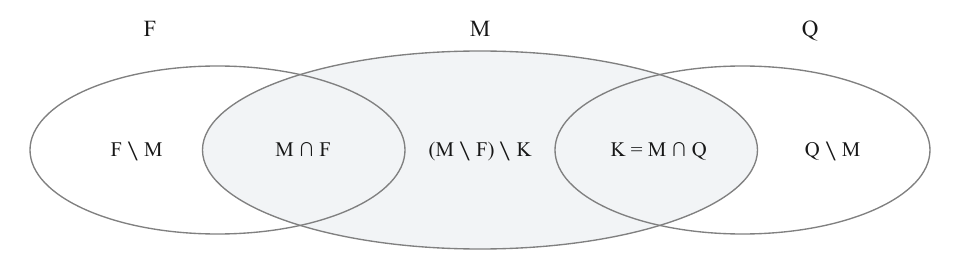}
\caption{Relationships among the matching edge sets.}
\label{fig:sets}
\end{figure}

Deleting the conflict edges from $Q$ gives a matching of
cardinality $L-c$ in $H$. Following the comparison matching
construction used in~\cite{khojabaghyan2012} to prove the
bound for graphs with a perfect matching, let $W=V(K)$ and set
\begin{equation}
X=\{f\in F:V(\{f\})\subseteq W\},\qquad x=|X|.
\label{eq:X}
\end{equation}
We have $X\cap M=\varnothing$. Indeed, suppose that
$f\in X\cap M$. Since the endpoints of $f$ are already
covered by $K\subseteq M$, the matching property of $M$
implies $f\in K\subseteq Q$, contradicting
$F\cap Q=\varnothing$. Furthermore, $Q\setminus M=Q\setminus K$.
Because $Q$ is a matching, the edges of $Q\setminus K$ have
no common endpoints with the edges of $K$, and hence all
their endpoints lie outside $W=V(K)$. Both endpoints of
every edge of $X$, on the other hand, lie in $W$.
Thus $Q\setminus M$ and $X$ are vertex-disjoint. As both
are matchings in $H$,
\begin{equation}
N_0=(Q\setminus M)\cup X
\label{eq:N0}
\end{equation}
is a matching in $H$, satisfying
\begin{equation}
|N_0|=L-c+x,\qquad t=\mu-(L-c+x)\geq 0.
\label{eq:t}
\end{equation}
Figure~\ref{fig:construction} illustrates this construction:
after deleting the conflict edge set $K$ from $Q$, we add
the edges of $X$ between endpoints of $K$ to obtain the
matching $N_0$ in $H$.

\begin{figure}[htbp]
\centering
\includegraphics[width=\linewidth]{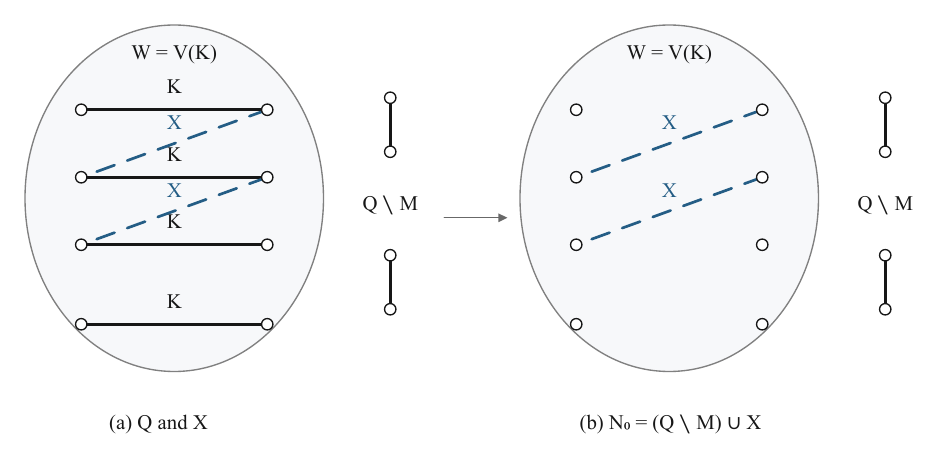}
\caption{Construction of the matching $N_0$. Black solid lines
represent edges of $Q$, and blue dashed lines represent
$X\subseteq F\setminus M$.}
\label{fig:construction}
\end{figure}

Let $Z$ be the set of noncritical edges in $K$ that are
vertex-disjoint from $X$, and write $z=|Z|$.

\begin{lemma}
\label{lem:z}
The sets defined above satisfy
\begin{equation}
z\leq 2t.
\label{eq:z}
\end{equation}
\end{lemma}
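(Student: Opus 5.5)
This follows from Lemma~\ref{lem:deficit}, applied with the matching $N_0$ of~\eqref{eq:N0} and the edge set $Z$. I would check the hypotheses of that lemma one at a time; the inequality~\eqref{eq:z} is then exactly its conclusion~\eqref{eq:deficit}.

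First, $N_0$ is a matching in $H$, and by~\eqref{eq:t} its cardinality is $L-c+x=\mu-t$ with $t\geq 0$. So it has the required form $\mu-t$. Second, $Z\subseteq K\subseteq M$, and by definition every edge of $Z$ is noncritical with respect to $H$. It remains to verify the disjointness condition $V(Z)\cap V(N_0)=\varnothing$, which I treat separately for the two parts of $N_0=(Q\setminus M)\cup X$.

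For $Q\setminus M=Q\setminus K$, every edge has both endpoints outside $W=V(K)$, as already observed before~\eqref{eq:N0}. Since $V(Z)\subseteq V(K)=W$, these edges share no vertex with $Z$. For $X$, disjointness from $Z$ holds directly, because $Z$ was defined to consist only of edges vertex-disjoint from $X$. Hence $V(Z)\cap V(N_0)=\varnothing$, and Lemma~\ref{lem:deficit} yields $z=|Z|\leq 2t$.

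No step here is a real obstacle. The only point needing care is the disjointness of $Z$ from $Q\setminus M$, and that relies on $Q$ being a matching containing $K$: an edge of $Q\setminus K$ cannot touch an endpoint of an edge of $K$.
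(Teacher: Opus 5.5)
Your proposal is correct and follows the paper's proof: both apply Lemma~\ref{lem:deficit} to $N_0=(Q\setminus M)\cup X$ and $Z$. In both, the disjointness from $Q\setminus M$ comes from $K\subseteq Q$ with $Q$ a matching, and the disjointness from $X$ comes from the definition of $Z$.
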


\begin{proof}
Since $Z\subseteq K\subseteq M$, the set $Z$ is a matching
disjoint from $E(H)$, and each of its edges is noncritical
with respect to $H$. As $K\subseteq Q$ and $Q$ is a matching,
$Z$ is vertex-disjoint from $Q\setminus M$; by definition,
$Z$ is also vertex-disjoint from $X$. Hence
$V(Z)\cap V(N_0)=\varnothing$. Applying
Lemma~\ref{lem:deficit} to $N_0$ in~\eqref{eq:N0}
proves the claim.
\end{proof}

\subsection{An assignment argument on oriented alternating components}

The remaining conflict edges $K\setminus Z$ consist of
critical edges and noncritical edges sharing an endpoint
with an edge of $X$.
We use the alternating structure of $M\symdiff F$ to
bound their number.

\begin{lemma}
\label{lem:assignment}
The sets defined above satisfy
\begin{equation}
c-z\leq 2x+r.
\label{eq:assignment}
\end{equation}
\end{lemma}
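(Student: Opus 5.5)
The plan is to split $K\setminus Z$ into two groups and charge each group to a different set. Group (i) consists of the edges of $K\setminus Z$ that share an endpoint with an edge of $X$. This group contains every noncritical edge of $K\setminus Z$, by the definition of $Z$, and possibly some critical edges. Group (ii) consists of the critical edges of $K$ that are vertex-disjoint from $X$. Every edge of $K\setminus Z$ lies in exactly one group. I will show that group (i) has at most $2x$ edges and group (ii) has at most $r$ edges, which gives $c-z\leq 2x+r$.

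For group (i), each $f\in X$ has both endpoints in $W=V(K)$. Since $K$ is a matching, each endpoint of $f$ lies on exactly one edge of $K$. Hence $f$ shares an endpoint with at most two edges of $K$, and summing over $X$ bounds group (i) by $2x$.

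For group (ii), I will work in $M\symdiff F$. By the preliminaries, each of its components is an even alternating path or an alternating cycle, with equal numbers of $M$-edges and $F$-edges. Orient each cycle cyclically. Each even path has one end edge in $M\setminus F$ and the other in $F\setminus M$; orient it from its $M$-end toward its $F$-end. In either case, every edge $e=vw\in K\subseteq M\setminus F$ is followed in the orientation by an $F$-edge $vv'\in F\setminus M$ at its forward endpoint $v$. The key claim is that $v'$ is covered by an edge $g\in M\setminus F$ with $g\notin K$; this $g$ is then the next $M$-edge after $e$ in the orientation, and I assign $e$ to $g$. The claim is where both hypotheses are used.
\begin{itemize}
\item Suppose $v'$ were $M$-free. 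Then $v'\notin V(M)$, $v'v\in E$ and $vw\in M$, so property~\eqref{eq:S} would make $e$ noncritical, a contradiction. Hence $v'$ is covered by some $M$-edge $g$, and $g\neq vv'$ because $vv'\notin M$, so $g\in M\setminus F$. This case includes the situation where $v'$ is the terminal vertex of a path, which is exactly why I orient paths toward their $F$-end.
\item Suppose $g\in K$. Then both endpoints $v,v'$ of the $F$-edge $vv'$ lie in $W$, so $vv'\in X$ and $e$ shares the endpoint $v$ with an edge of $X$. This contradicts $e$ being in group (ii). Hence $g\in(M\setminus F)\setminus K$.
\end{itemize}

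The map $e\mapsto g$ is injective, since in an oriented path or cycle each $M$-edge has a unique preceding $M$-edge. Therefore group (ii) has at most $|(M\setminus F)\setminus K|=r$ edges, and adding the two bounds yields~\eqref{eq:assignment}.

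I expect the main obstacle to be making the assignment injective. A naive assignment that lets a critical edge choose either neighbouring $M$-edge could charge one edge of $(M\setminus F)\setminus K$ twice, once from each side. The consistent orientation of each component removes this. The only remaining point is to check that no critical edge runs off the end of a path in the chosen direction, and the first case above rules that out.
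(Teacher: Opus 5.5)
Your proof is correct and matches the paper's argument: it uses the same orientation of $M\symdiff F$ (your ``$M$-end to $F$-end'' is the paper's ``$F$-free endpoint to $M$-free endpoint''), the same appeal to property~\eqref{eq:S} to show that the $F$-edge after a critical edge is followed by another $M$-edge, and the same charging of at most two edges of $K$ to each edge of $X$ and at most one to each edge of $(M\setminus F)\setminus K$. The only difference is cosmetic: you split $K\setminus Z$ by whether an edge touches $X$ rather than by criticality, so the case $e'\in K$, which the paper redirects to $f\in X$, becomes a contradiction inside your group (ii).
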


\begin{proof}
Since $M$ and $F$ are both maximum matchings, every
component of $M\symdiff F$ containing an edge is an
alternating path of even length or an even cycle.
Orient each path from its $F$-free endpoint to its
$M$-free endpoint, and choose either cyclic orientation
for each cycle. Then each edge of $M\setminus F$ is
immediately followed by a unique adjacent edge of
$F\setminus M$, as illustrated in Fig.~\ref{fig:oriented}.

\begin{figure}[htbp]
\centering
\includegraphics[width=\linewidth]{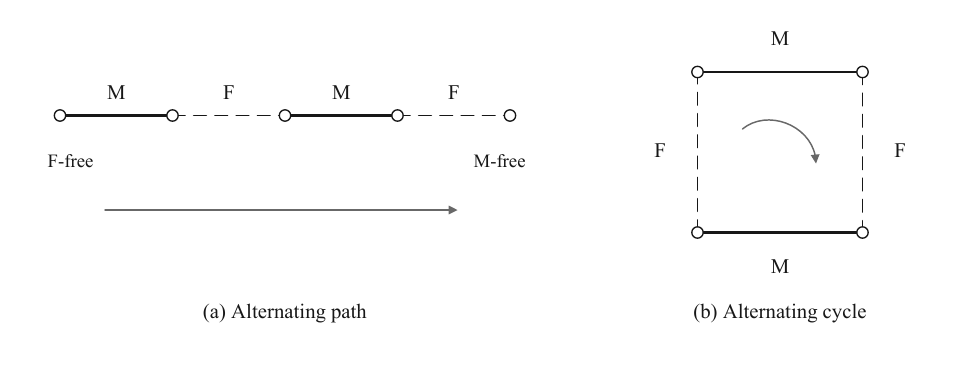}
\caption{Oriented alternating components. Thick solid lines
represent edges of $M\setminus F$, and dashed lines represent
edges of $F\setminus M$. Each path is oriented from its
$F$-free endpoint to its $M$-free endpoint, and each cycle
is given a fixed cyclic orientation.}
\label{fig:oriented}
\end{figure}

For a critical edge $e\in K$, let $f$ be the edge of
$F\setminus M$ immediately following it. There must be
an edge $e'\in M\setminus F$ immediately after $f$.
On a cycle, this follows from the alternating structure.
On a path, if $f$ were the last edge, it would join an
endpoint of $e$ to an $M$-free vertex; property~\eqref{eq:S}
would then force $e$ to be noncritical, a contradiction.
Thus $e,f,e'$ are three consecutive edges. If $e'\in K$,
both endpoints of $f$ lie in $W$, so $f\in X$; assign
$e$ to $f$. If $e'\notin K$, assign $e$ to
$e'\in(M\setminus F)\setminus K$.

For a noncritical edge $e\in K\setminus Z$, the definition
ensures that $e$ shares an endpoint with at least one edge of $X$;
assign $e$ to any such edge. Every edge of $K\setminus Z$
has now been assigned exactly once. The three cases are
illustrated in Fig.~\ref{fig:assignment}.

\begin{figure}[htbp]
\centering
\includegraphics[width=.94\linewidth]{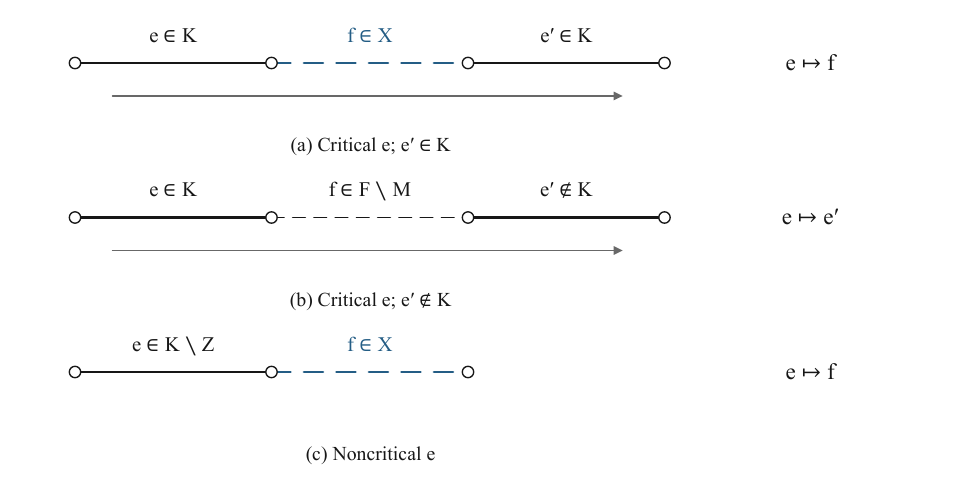}
\caption{Assignment of conflict edges.}
\label{fig:assignment}
\end{figure}

Each edge $f\in X$ receives assignments only from edges
of $K$ sharing an endpoint with it. Since $K$ is a matching, each
endpoint of $f$ is incident to at most one edge of $K$,
so $f$ receives at most two assignments. An edge
$e'\in(M\setminus F)\setminus K$ receives an assignment
only under the second rule for critical edges, which
requires the consecutive order $e,f,e'$. Once the
orientation is fixed, the predecessor $f$ of $e'$ and
the predecessor $e$ of $f$ on the path or cycle are
both unique. Hence $e'$ receives at most one assignment.
The two types of receiving edges accommodate at most
$2x+r$ assignments in total, and therefore
$|K\setminus Z|=c-z\leq 2x+r$.
\end{proof}

\subsection{The main theorem}

\begin{theorem}
\label{thm:guarantee}
Let $M$ be a maximum matching of $G$. If $M$ satisfies
property~\eqref{eq:S}, then
\begin{equation}
\nu(G-M)\geq \frac{2}{3}L(G).
\label{eq:guarantee}
\end{equation}
\end{theorem}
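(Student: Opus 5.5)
The proof is pure bookkeeping. Everything combinatorial has already been done in Lemmas~\ref{lem:z} and~\ref{lem:assignment}, together with the inequality $a\le\mu$ from~\eqref{eq:a}. I fix the optimal pair $(F,Q)$ and the quantities $c,a,r,x,t,z$ exactly as defined in this section, and write $\mu=\nu(G-M)$.

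\textbf{Step 1: express $L$ through $\mu$.} By~\eqref{eq:t}, $\mu=L-c+x+t$. Hence
\[
L=\mu+c-x-t .
\]
So the target inequality $L\le \tfrac32\mu$ is equivalent to $c-x-t\le \mu/2$, that is,
\[
2c-2x-2t\le\mu .
\]

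\textbf{Step 2: bound $c$.} Adding Lemma~\ref{lem:z} ($z\le 2t$) and Lemma~\ref{lem:assignment} ($c-z\le 2x+r$) gives
\[
c\le 2t+2x+r .
\]
I spend only one copy of $c$ on this bound:
\[
2c = c+c \le c+2t+2x+r .
\]
Therefore
\[
2c-2x-2t\le c+r=a .
\]

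\textbf{Step 3: close with~\eqref{eq:a}.} Since $a\le\mu$, Step~2 gives $2c-2x-2t\le\mu$. This is exactly the reformulated target from Step~1, so $\nu(G-M)=\mu\ge\tfrac23 L(G)$.

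The only point needing care is the choice in Step~2 to use the combined bound on $c$ once, not twice. Substituting it twice would leave a term $2r$. That term cannot be controlled by $\mu$, because~\eqref{eq:a} bounds only $c+r=a$. Keeping one copy of $c$ and pairing it with $r$ is what lets $a\le\mu$ finish the argument. No further structural input is needed. Hypothesis~\eqref{eq:S} enters only through Lemma~\ref{lem:assignment}, where it was used to extend each critical conflict edge to a triple $e,f,e'$.
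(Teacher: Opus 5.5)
Your proof is correct and matches the paper's: both combine Lemmas~\ref{lem:z} and~\ref{lem:assignment} into $c\le 2t+2x+r$, substitute $t=\mu-(L-c+x)$ and $r=a-c$, and finish with $a\le\mu$. Your intermediate inequality $2c-2x-2t\le a$ is the paper's $2L\le a+2\mu$ rewritten via $L=\mu+c-x-t$.
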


\begin{proof}
By Lemmas~\ref{lem:z} and~\ref{lem:assignment},
\begin{equation}
c\leq z+2x+r\leq 2t+2x+r.
\label{eq:combine}
\end{equation}
Substituting $t=\mu-(L-c+x)$ and $r=a-c$ gives
\[
c\leq 2[\mu-(L-c+x)]+2x+(a-c)
 =a+c+2\mu-2L.
\]
Using $a\leq\mu$ from~\eqref{eq:a}, we obtain
\begin{equation}
2L\leq a+2\mu\leq 3\mu.
\label{eq:finalbound}
\end{equation}
Thus $\mu\geq 2L/3$.
\end{proof}

\section{Running time and tightness}
\label{sec:complexity}

\subsection{Running time}

\begin{theorem}
\label{thm:runtime}
With a standard implementation of maximum matching in
general graphs, Algorithm~\ref{alg:exchange} runs in
$O(n^5)$ time and $O(n^2)$ space.
\end{theorem}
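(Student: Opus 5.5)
The plan is to bound the number of maximum-matching computations and the cost of each. Each computation uses a standard general-graph algorithm (Edmonds' blossom algorithm in the $O(n^3)$ implementation of~\cite{galil1986}). The count has an outer and an inner part.

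For the outer part, Proposition~\ref{prop:termination} shows that at most $\lfloor n/2\rfloor$ successful exchanges occur. Hence Step~2 is entered at most $\lfloor n/2\rfloor+1$ times. Each entry builds $H=G-M$ in $O(n^2)$ time with an adjacency matrix, computes $N$ and $\mu$ in $O(n^3)$, and forms $U$ in $O(n)$. In addition, Step~1 computes one maximum matching in $O(n^3)$.

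For the inner part, we examine a single scan between consecutive entries to Step~2. It visits at most $|M|\le n/2$ edges $vw$. For each edge, testing whether $v$ or $w$ has a neighbour in $U$ takes $O(n)$ time by scanning adjacency-matrix rows, and choosing $u$ costs no more. The dominant cost is computing $\nu(H+vw)$. Rather than running a fresh $O(n^3)$ algorithm, I would use the known maximum matching $N$ of $H$. By Lemma~\ref{lem:exchange}, $\nu(H+vw)\in\{\mu,\mu+1\}$, and it equals $\mu+1$ exactly when $H+vw$ has an $N$-augmenting path. Such a path must use the new edge $vw$, since $N$ is maximum in $H$. So one augmenting-path search from the free vertices of $N$ in $H+vw$ decides criticality. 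With Edmonds' blossom search this costs $O(n^2)$ time on a dense graph, though even a full $O(n^3)$ recomputation is affordable for the target bound. With the cheaper test, a scan costs $O(n)\cdot O(n^2)=O(n^3)$; with full recomputation it costs $O(n^4)$.

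Multiplying gives $O(n)\cdot O(n^4)=O(n^5)$ in the worst case, which matches the claim even without the augmenting-path shortcut. I will therefore present the simpler version, a full maximum-matching computation for each tested edge, so the argument relies only on the standard $O(n^3)$ bound. For space, the adjacency matrix of $G$, the graph $H$ (as a marking of matrix entries), the matchings $M$ and $N$, the set $U$, and the working storage of one blossom computation all fit in $O(n^2)$. Since the matching computations run one at a time and reuse storage, the total is $O(n^2)$.

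The main obstacle is bookkeeping rather than mathematics. Step~5 restarts the scan from Step~2, so I must check that each restart is charged to a successful exchange and that the final unsuccessful scan is counted once. That gives at most $\lfloor n/2\rfloor+1$ scans, each costing at most $O(n)\cdot O(n^3)$.
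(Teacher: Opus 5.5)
Your proposal is correct and follows the paper's own argument. There are $O(n)$ scans, one per successful exchange by Proposition~\ref{prop:termination} plus the final one; each scan tests at most $\lfloor n/2\rfloor$ matching edges with one $O(n^3)$ maximum-matching call, the per-round recomputation of $N$ is absorbed, and $O(n^2)$ space comes from the adjacency matrix and reuse of the subroutine's workspace. Your aside is also valid: deciding criticality by a single $N$-augmenting-path search in $H+vw$ would sharpen the bound to $O(n^4)$, but as you note it is not needed for the stated claim.
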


\begin{proof}
A maximum matching in a general graph can be found
in $O(n^3)$ time and $O(n^2)$ space~\cite{edmonds1965,galil1986}.
Each scan considers at most $\lfloor n/2\rfloor$
matching edges. For each edge, finding a free neighbor
takes $O(n)$ time, and the maximum matching subroutine
is called at most once to compute the matching number
after adding that edge back. Thus each scan takes
$O(n^4)$ time. By Proposition~\ref{prop:termination},
there are at most $O(n)$ successful exchanges.
Initialization and the recomputation of a residual
maximum matching in each round are also covered by
this bound. Hence the total running time is
\begin{equation}
O(n)\cdot O(n^4)=O(n^5).
\label{eq:time}
\end{equation}
An adjacency-matrix representation of the graph,
together with sequential reuse of the workspace
of the maximum matching subroutine, requires
$O(n^2)$ total space.
\end{proof}

\subsection{Tightness of the termination condition}

\begin{proposition}
\label{prop:tight}
There exist a graph $G$ and a maximum matching $M$
satisfying property~\eqref{eq:S} such that
$\nu(G-M)=2L(G)/3$. Thus the $2/3$ approximation
bound implied by property~\eqref{eq:S} alone is tight.
\end{proposition}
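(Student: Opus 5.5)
The plan is to construct an explicit small example and verify it by direct computation. Guided by the seven-vertex example mentioned in the introduction, I would take $n=7$, so that $\nu(G)=3$ and every maximum matching leaves exactly one free vertex. The aim is $L(G)=3$ and $\nu(G-M)=2$ for a suitably chosen $M$ satisfying~\eqref{eq:S}. I have not yet fixed a concrete graph; the construction below is a search procedure, and finding the graph is the main obstacle.

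To locate the example, I would read off the equality conditions in the proof of Theorem~\ref{thm:guarantee}. Equality in~\eqref{eq:finalbound} requires $a=\mu=2$ and $L=3$. Thus $M$ shares exactly one edge with $F$, and the two edges of $F\setminus M$ form a maximum matching of $H$. Equality in~\eqref{eq:combine} requires $c=z+2x+r$ and $z=2t$. These suggest a configuration in which the critical conflict edges sit in the interior of an alternating path of $M\symdiff F$. Each is followed by an edge of $X$ or of $(M\setminus F)\setminus K$, and the free vertex is attached only to noncritical edges.

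As a first candidate I would start from a $6$-cycle $1,2,\dots,6$ with $F=\{12,34,56\}$ and $Q=\{23,45,61\}$, which already gives $L=3$. I would then attach vertex $7$ and add a few chords, so that some $M$ containing one edge of $F$ and an edge at $7$ has a residual graph whose maximum matching is forced down to $2$. The chords must be placed so that they do not create a third residual edge disjoint from two others. The first attempts without chords fail, since the residual graph then contains a long path, so a careful case search over chord placements is needed. I would stop at the first configuration passing all the checks below.

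Once $G$ and $M$ are fixed, the verification proceeds in four steps.
\begin{enumerate}
\item Exhibit $F$ and a matching $Q$ of $G-F$ with $|Q|=3$. Since $|Q|\leq\lfloor 7/2\rfloor=3$, this gives $L(G)=3$.
\item Show $\nu(G-M)\le 2$ with a Tutte--Berge witness, for instance a set of two vertices covering all residual edges, or a small odd set. Also exhibit two disjoint residual edges, so that $\nu(G-M)=2$.
\item For every edge $vw\in M$ having an endpoint adjacent to the free vertex, check that $\nu(H+vw)=2$, again by a Tutte--Berge witness in $H+vw$. This establishes~\eqref{eq:S}.
\item Conclude $\nu(G-M)=2=2L(G)/3$.
\end{enumerate}
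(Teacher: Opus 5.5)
Your proposal has a genuine gap. The proposition is a pure existence claim, so its entire content is a concrete pair $(G,M)$ together with its verification, and you stop explicitly before producing one (``I have not yet fixed a concrete graph''). The four-step checklist is sound: $|F|=|Q|=3=\lfloor 7/2\rfloor$ does make $F$ maximum and force $L(G)=3$, and a two-vertex cover of a residual graph does certify matching number at most two. Your reading of the equality cases in the proof of Theorem~\ref{thm:guarantee} is also a correct heuristic, since $\mu=2$ and $L=3$ force $a=2$. But a search procedure whose outcome is not exhibited proves nothing, so existence has not been established.

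The missing step is easy to supply. The paper takes $V=\{0,\dots,6\}$, $E=\{01,04,12,14,23,45,56\}$ and $M=\{14,23,56\}$. Here $F=\{01,23,45\}$ and $Q=\{04,12,56\}$ together form the path $3,2,1,0,4,5,6$, and $14$ is a single chord. The residual graph $H$ is the path $2,1,0,4,5$, so $\nu(H)=2$. The only free vertex $0$ is adjacent only to $1$ and $4$, and $H+14$ still has only five nonisolated vertices, so $\nu(H+14)=2$ and~\eqref{eq:S} holds.

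Your own $6$-cycle candidate can be completed just as quickly: add the edges $36$ and $37$ and take $M=\{12,45,36\}$. Then $H=\{23,34,56,61,37\}$ and $H+36$ are both covered by $\{3,6\}$ and both contain $\{23,56\}$, so $\nu(H)=\nu(H+36)=2$. The free vertex $7$ has $3$ as its only neighbor, so~\eqref{eq:S} holds, and $F=\{12,34,56\}$, $Q=\{23,45,61\}$ give $L(G)=3$. The chord $36$ is essentially forced. Getting $\nu(H)=2$ requires $M$ to contain two opposite cycle edges. If the third edge of $M$ used vertex $7$, the free vertex would be adjacent to an endpoint of each of these two edges. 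Restoring either of them would then create a path on six vertices, so both would be critical.
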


\begin{proof}
Let $V=\{0,1,2,3,4,5,6\}$ and
\begin{equation}
E=\{01,04,12,14,23,45,56\}.
\label{eq:example}
\end{equation}
Set $M=\{14,23,56\}$, as shown in Fig.~\ref{fig:tight}(a).
Since $|V|=7$, we have $\nu(G)\leq 3$, so $M$ is
maximum. The nonisolated part of the residual graph
$H=G-M$ is the path $2-1-0-4-5$, and thus $\nu(H)=2$.

\begin{figure}[htbp]
\centering
\includegraphics[width=\linewidth]{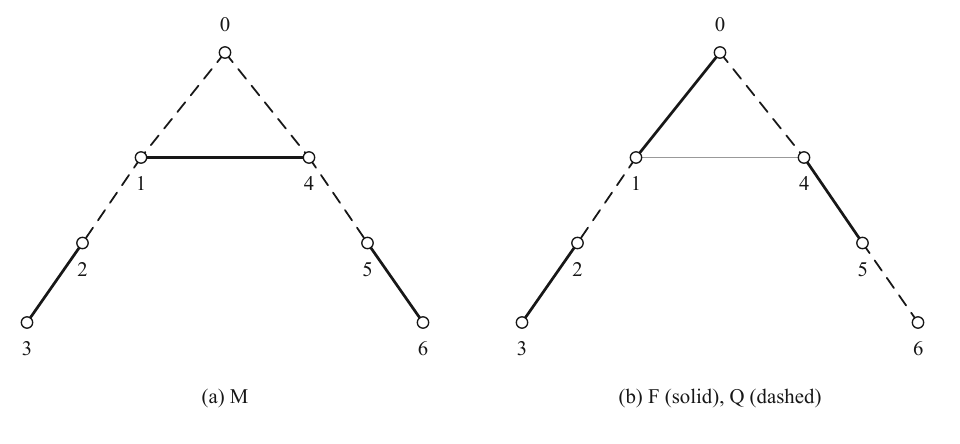}
\caption{A terminal instance attaining the $2/3$ bound.}
\label{fig:tight}
\end{figure}

The only $M$-free vertex is $0$, whose only neighbors
are $1$ and $4$. Hence $14$ is the only matching edge
that needs to be tested. The nonisolated vertices
of $H+14$ are still $0,1,2,4,5$, so its matching
number is at most two. Consequently,
$\nu(H+14)=\nu(H)=2$, and property~\eqref{eq:S}
holds. On the other hand, $F=\{01,23,45\}$ and
$Q=\{04,12,56\}$ are two edge-disjoint matchings,
each with three edges, as shown in Fig.~\ref{fig:tight}(b).
Thus $L(G)=3$, and $\nu(G-M)=2L(G)/3$.
\end{proof}

\section{Conclusion}

We have presented a $2/3$-approximation algorithm for
the residual maximum matching problem in general
unweighted graphs. Through single-edge exchanges
involving free vertices, the algorithm obtains a
maximum matching satisfying a local termination
condition. A deficit bound for noncritical edges
and an assignment argument on oriented alternating
components together establish the approximation
guarantee implied by this condition, and an example
attains the bound. Further research may investigate
whether stronger local improvement conditions can
yield a ratio better than $2/3$, and how the
structural analysis developed here can be extended
to the residual maximum matching problem in
weighted graphs.

\section*{Declaration of generative AI and AI-assisted technologies in the manuscript preparation process}

The authors used an AI assistant for translation and language revision, manuscript organization, mathematical discussion, and reference checking. The authors reviewed and edited the content as needed and assume full responsibility for all content.

\end{document}